\documentclass[aps,pra,10pt,twocolumn,superscriptaddress,notitlepage,footinbib,longbibliography]{revtex4-2}

\usepackage{natbib}
\usepackage{physics}
\usepackage[dvipsnames]{xcolor}
\usepackage{amsmath, amsfonts, bm}
\usepackage{siunitx}
\usepackage{graphicx}
\usepackage[english]{babel}
\usepackage[utf8]{inputenc}
\usepackage[autostyle]{csquotes}\MakeOuterQuote{"}
\usepackage[shortlabels]{enumitem}
\usepackage{dsfont}
\usepackage{comment}
\usepackage{verbatim}
\usepackage{xpatch}
\usepackage{tcolorbox}
\usepackage{float}
\usepackage{etoolbox}
\usepackage{amssymb}	
\usepackage{soul}
\usepackage{amsthm,mathtools}
\usepackage{postnotes}

\usepackage[hidelinks]{hyperref}
\usepackage[capitalize]{cleveref}

\crefname{appendix}{Appendix}{Appendices}

\makeatletter
\AddToHook{cmd/appendix/before}{\def\cref@section@alias{appendix}\def\cref@subsection@alias{appendix}}
\makeatother

\date{\today}

\definecolor{cadetgrey}{rgb}{0.57, 0.64, 0.69}

\newtheorem{lemma}{Lemma}

\newtheorem{corollary}{Corollary}

\theoremstyle{definition}

\newcommand{\green}[1]{{\color{OliveGreen} #1}}

\newcommand{\set}[1]{\{#1\}}

\newcommand{\identity}{\mathds{1}}

\newcommand{\rv}[1]{\bm{#1}}

\newcommand{\epsleak}{\varepsilon}

\newcommand{\pB}[1]{p_{#1}^{\rm B}}
\newcommand{\piA}[1]{p_{#1}^{\rm A}}

\newcommand{\Beve}{B}
\newcommand{\Bbob}{B}
\newcommand{\Ndet}{\rv{N}_{\rm det}}

\tcbuselibrary{theorems}

\tcbset{
  myboxcolor/.style={colframe=#1!25!black}
}

\newtcbtheorem[]{mybox}{Box}%
  {colback=black!8!white, colframe=blue!25!black,
   fontupper=\small, fonttitle=\bfseries}{box}

\definecolor{cadetgrey}{rgb}{0.57, 0.64, 0.69}

\begin{document}

\title{Phase-error estimation for quantum key distribution with leaky receivers}
\author{Álvaro Navarrete}
\affiliation{Faculty of Engineering, University of Toyama, Gofuku 3190, Toyama 930-8555, Japan}
\affiliation{Vigo Quantum Communication Center, University of Vigo, Vigo E-36310, Spain}
\affiliation{School of Telecommunication Engineering, Department of Signal Theory and Communications, University of Vigo, Vigo E-36310, Spain}
\affiliation{atlanTTic Research Center, University of Vigo, Vigo E-36310, Spain}

\author{Margarida Pereira}
\author{Guillermo Currás-Lorenzo}
\affiliation{Vigo Quantum Communication Center, University of Vigo, Vigo E-36310, Spain}
\affiliation{School of Telecommunication Engineering, Department of Signal Theory and Communications, University of Vigo, Vigo E-36310, Spain}
\affiliation{atlanTTic Research Center, University of Vigo, Vigo E-36310, Spain}

\author{Akihiro Mizutani}
\affiliation{Faculty of Engineering, University of Toyama, Gofuku 3190, Toyama 930-8555, Japan}

\author{Marcos Curty}
\affiliation{Vigo Quantum Communication Center, University of Vigo, Vigo E-36310, Spain}
\affiliation{School of Telecommunication Engineering, Department of Signal Theory and Communications, University of Vigo, Vigo E-36310, Spain}
\affiliation{atlanTTic Research Center, University of Vigo, Vigo E-36310, Spain}

\author{Kiyoshi Tamaki}
\affiliation{Faculty of Engineering, University of Toyama, Gofuku 3190, Toyama 930-8555, Japan}

\begin{abstract}
Practical quantum key distribution (QKD) receivers may leak information about their measurement outcomes and settings to the channel---e.g., through detector backflashes or back-reflected Trojan-horse light---that could compromise the protocol's security. Here we present a simple finite-key security proof based on phase-error estimation for prepare-and-measure QKD in the presence of either a priori information leakage about the basis choices and/or a posteriori information leakage about the measurement outcomes. The proof requires only a bound on the distinguishability of the side-channel states. Furthermore, the analysis is modular and compatible with existing security proofs that address detector and source imperfections, making it applicable to a wide range of practical QKD implementations.
\end{abstract}

\maketitle


\paragraph*{Introduction.}
Quantum key distribution (QKD) promises information-theoretic security~\cite{bb84,lo2014secure,pirandola2020advances,xu2020secure}, but this guarantee holds only if the real devices match the idealized models assumed in security proofs~\cite{gllp2004,marquardt2024implementation,zapatero2025}. The receiver is widely regarded as the most vulnerable component of a QKD setup~\cite{zapatero2025,marcomini2024dem}: an eavesdropper (Eve) can e.g. exploit a detection-efficiency mismatch~\cite{fung2009dem}, blind or control the detectors with bright light illumination~\cite{lydersen2010hacking}, or probe Bob's measurement unit with a Trojan-horse attack (THA)~\cite{vakhitov2001,jain2014trojan}. Although interference-based QKD~\cite{lo2012mdi,lucamarini2018overcoming} closes all detector side channels, prepare-and-measure (P\&M) schemes remain preferred for short- and mid-range links in current deployments owing to their simpler implementation. Setups of this latter type require security proofs that explicitly incorporate receiver imperfections~\cite{tupkaryPhaseError2025,wangPhaseError2025,marcomini2024dem,naharImperfectDetectors2026,currasLorenzo2025detector,navarreteNumericalSecurity2026}.

Another relevant receiver imperfection is detector leakage: after Bob's measurement, his receiver may emit signals---e.g., backflash radiation from avalanche photodiodes~\cite{kurtsiefer_breakdown_2001,pinheiro2018eavesdropping}---that carry information about his recorded bit value. Crucially, a bit value is only generated, and can thus be leaked, in those rounds in which Bob observes a detection. Therefore, the associated secret-key penalty is expected to scale with the number of detected rounds, so that the tolerable leakage is essentially independent of the channel loss; we refer to a security analysis achieving this scaling as \emph{loss tolerant}. Unfortunately, the few existing security analyses addressing this imperfection do not have this property, making them extremely pessimistic in typical scenarios where most rounds are undetected. In particular, Ref.~\cite{maroySecurityQuantum2010} establishes the asymptotic security of the BB84 protocol in this setting, but the analysis is restricted to leakage acting independently in each round and imposes a penalty per transmitted round. 
More recently, Ref.~\cite{arqandMutualInformation2024} extended the entropy accumulation framework to protocols leaking information in each round. This general framework has not yet been applied to specific QKD implementations, and a direct application would again penalize every transmitted round; whether a refined treatment could avoid this is an open question. Therefore, although this type of imperfection is inherently loss tolerant, a loss-tolerant security analysis has so far been absent.

In this work, we address this problem by presenting a simple and loss-tolerant finite-key security proof based on phase-error estimation for general P\&M schemes with---either active or passive---leaky receivers, which only requires a fidelity bound between the leakage states.
Importantly, our method is modular: it extends to non-projective qubit measurements, it can be combined with analyses accounting for other detector vulnerabilities~\cite{wangPhaseError2025} and it is compatible with security proofs that incorporate general source imperfections~\cite{curras-lorenzoSecurityFramework2025,navarreteNumericalSecurity2026}. 

Moreover, we also introduce techniques to accommodate leakage of Bob's basis choice before his measurement takes place. For instance, this may occur in active receivers through back-reflected Trojan-horse light or electromagnetic side channels. To the best of our knowledge, no previous security proof considers this scenario. Unlike post-measurement bit leakage, this imperfection is not inherently loss tolerant, and our bound for it accordingly carries a penalty per transmitted round. 

\begin{figure*}
    \centering
    \includegraphics[width=0.8\textwidth]{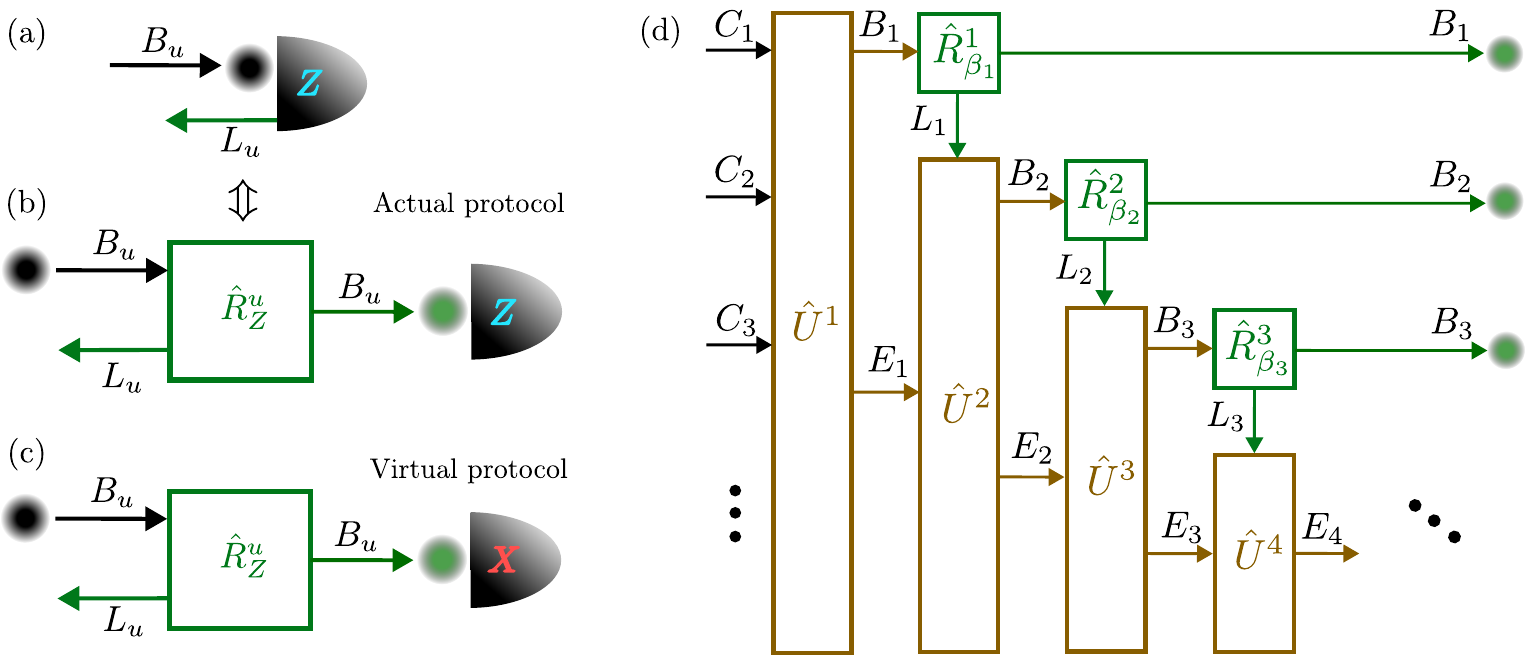}
    \caption{The isometry modeling the information leakage is designed to commute with Bob's measurement, and therefore their order can be swapped without changing the observed statistics. Thus, the actual scenario (a), in which Bob first measures $\Beve_u$ and the leakage system $L_u$ is generated afterwards, is equivalent to a fictitious scenario (b) in which the leakage isometry is applied first and Bob then measures the resulting system. In the virtual protocol (c), Bob measures the output system $\Bbob_u$ of $\hat{R}_Z^{u}$ in the $X$ basis, $\set{\ket{0_X},\ket{1_X}}$. (d) Eve's most general attack in the presence of receiver-side leakage. 
    }
    \label{fig:leakage_model}
\end{figure*}
%

\paragraph*{Protocol and assumptions.}
For concreteness, we present our analysis for the standard BB84 protocol with a single-photon source and an active receiver, although we remark the techniques we develop apply for a broader range of schemes, including decoy-state protocols~\cite{hwang2003quantum,lo2005decoy,wang2005beating}. In each round $u$, Alice selects a bit $a\in\set{0,1}$ at random and a basis $\alpha\in\set{Z,X}$ with probability $\piA{\alpha}$, and sends the corresponding BB84 state to Bob. Upon reception, Bob selects a measurement basis $\beta\in\set{Z,X}$ with probability $\pB{\beta}$ and measures the incoming system; for notational simplicity, we omit the round index $u$ in the settings and variables whenever it is clear from the context. After the quantum phase, Alice and Bob announce their basis choices: rounds in which both chose $Z$ form the sifted key, and rounds in which both chose $X$ are used as test rounds for parameter estimation.
 
Our analysis requires that the system $\Beve_u$ measured by Bob in round $u$ can be effectively described as a qutrit, consisting of a vacuum component $\ket{\perp}_{\Beve_u}$ and a two-dimensional detection subspace spanned by $\set{\ket{0}_{\Beve_u},\ket{1}_{\Beve_u}}$. Concretely, we consider that Bob's received states have a block-diagonal form satisfying
\begin{equation}
  \rho_{\Beve_u} = \mathds{1}_{\Beve_u}^{\perp}\rho_{\Beve_u}\mathds{1}_{\Beve_u}^{\perp}
             + \mathds{1}_{\Beve_u}^{\rm det}\rho_{\Beve_u}\mathds{1}_{\Beve_u}^{\rm det},
\end{equation}
where $\mathds{1}_{\Beve_u}^{\rm det}=\dyad{0}_{\Beve_u}+\dyad{1}_{\Beve_u}$ and $\mathds{1}_{\Beve_u}^{\perp}=\dyad{\perp}_{\Beve_u}$ are projectors onto the qubit (detected) and vacuum (non-detected) subspaces. For the detected rounds, we assume that Bob performs an ideal projective qubit measurement in his chosen basis $\beta$. We note, however, that this assumption can be relaxed to non-projective and imperfect qubit measurements (see \cref{app:non-projective}).
 
The qubit description for the detected states can be justified in different ways. A general route is to restrict the analysis to those rounds in which Bob receives a single photon, whose number can be estimated within the protocol~\cite{Koashi2008PracticalDetectors,moroder2009detector,Kawakami2025PassiveBiasedBB84}. This allows a direct combination of our results with security analyses that account for detector imperfections, such as mismatches in the efficiency and dark-count rates of the detectors~\cite{wangPhaseError2025} (see \cref{app:detector-imperfections}). Alternatively, for receivers that admit a squashing model~\cite{beaudrySquashingModels2008,gittsovichSquashingModel2014}---e.g., an active BB84 receiver with two threshold detectors of equal efficiency and dark-count rate---the arriving optical signals can be effectively mapped to the qutrit description above. As discussed below, this second route requires an additional assumption on the form of the leakage. Finally, although we assume ideal BB84 sources for simplicity, our technique is in principle compatible with security proofs that incorporate general source flaws, side channels and correlations, such as~\cite{curras-lorenzoSecurityFramework2025,navarreteNumericalSecurity2026}.

\paragraph*{Leakage model.}
We model receiver-side information leakage as follows. We assume that, in each round $u$, after Bob measures system $B_u$ in his selected basis $\beta$ and obtains an outcome $b'\in\set{\perp,0,1}$, his receiver emits a leakage state $\sigma_{L_u}^{(b',\beta)}$ on system $L_u$, which becomes available to Eve, as depicted in \cref{fig:leakage_model}(a). This state carries information about both $b'$ and $\beta$, and may vary arbitrarily across rounds---i.e., $\sigma_{L_u}^{(b',\beta)}$ implicitly depends on the round index $u$; in \cref{app:correlations} we show that, by making minimal modifications at the protocol level, it may even depend on the receiver's record of the previous rounds.
When the qubit description for the detected events arises from a squashing model, this leakage model additionally requires that the emissions depend on the physical detection pattern only through the pair $(b',\beta)$. Finer-grained leakage---e.g., revealing the occurrence of a double click, which the squashing map assigns to a random bit---has no counterpart in the squashed description and falls outside our analysis, which appears to pose a general problem for qubit squashing maps in the context of information leakage. No such requirement arises when the analysis is restricted to the single-photon rounds, since $b'$ is then the outcome of an actual qubit measurement.
 
Crucially, our analysis below demands only that the leakage states associated with the two $Z$-basis detection outcomes are not too distinguishable, in the sense that, for all rounds $u$,
\begin{equation}
\label{eq:leakage-assumption}
F(\sigma_{L_u}^{(0,Z)},\sigma_{L_u}^{(1,Z)}) \geq \gamma^2,
\end{equation}
for some $\gamma > 0$, where $F$ denotes the fidelity. The form of the states $\sigma_{L_u}^{(0,X)}$, $\sigma_{L_u}^{(1,X)}$ and $\sigma_{L_u}^{(\perp,\beta)}$ is irrelevant for our security proof. Intuitively, this is because the sifted key is extracted only from the $Z$-basis rounds, while the $X$-basis outcomes are announced publicly anyway and the non-detected rounds are discarded. Moreover, as shown in \cref{app:purification}, by purifying the leakage states and granting the purifying systems to Eve, we can conservatively assume, without loss of generality, that all leakage states are pure. That is, $\sigma_{L_u}^{(b,\beta)} = \ketbra{e_{b,\beta}}_{L_u}$ (with $b\in\set{0,1}$) and $\sigma_{L_u}^{(\perp,\beta)} = \ketbra{e_{\perp,\beta}}_{L_u}$, with the overlap of the $Z$-basis leakage states being \emph{exactly} $\braket{e_{1,Z}}{e_{0,Z}}_{L_u} = \gamma$, with $\gamma$ real and non-negative.

To prove security via the phase-error estimation formalism, one introduces an entanglement-based scenario---equivalent to the actual protocol---in which, in each round, Alice prepares the entangled state~\footnote{To analyze more general sources, one could instead consider the entangled state $\sum_{i}\sqrt{p_{i}}\ket{i}_{A}\ket{\varphi_i}_{C}$, where $i=0,1,\dots,n_A$ indexes the $n_A$ states $\ket{\varphi_i}_{C}$, emitted with probability $p_{i}$.} $(\ket{0_Z}_{A_u}\ket{0_Z}_{C_u}+\ket{1_Z}_{A_u}\ket{1_Z}_{C_u})/\sqrt{2}$, then she measures her local ancilla $A_u$ in basis $\alpha$ with probability $\piA{\alpha}$, and finally sends $C_u$ to Bob, who receives $B_u$ after the possible action of Eve. This equivalence allows the definition of a virtual protocol in which, in the detected $Z$ rounds, Alice and Bob instead measure their systems in the $X$ basis. The phase-error rate is defined as the rate of $X$-basis errors in these virtual measurements, and upper bounding it certifies the secrecy of the key generated in the actual protocol~\cite{koashi2009simple}. For the virtual protocol to be valid, however, the information available to Eve must be identical in the actual and virtual protocols. In the absence of leakage this holds trivially, and the test $X$-basis bit-error rate serves as a random sample of the phase-error rate. The leakage breaks this equivalence: if Bob measures the detected $Z$-round systems in the $X$ basis, he never obtains the $Z$-basis outcome $b$ and can no longer reproduce the leakage $\ket{e_{b,Z}}_{L_u}$ of the actual protocol.

The equivalence is restored by the following observation: since the state of the input system $B_u$ is independent of Bob's basis choice, the actual protocol [\cref{fig:leakage_model}(a)] is statistically indistinguishable from the fictitious scenario [\cref{fig:leakage_model}(b)], in which the leakage is generated \emph{before} Bob's measurement, via a basis-dependent isometry $\hat{R}^{u}_{\beta}$ acting on the received system as
\begin{equation}
\label{eq:leakage_unitaries}
\begin{aligned}
  &\hat{R}^{u}_{\beta}\ket{b_{\beta}}_{\Beve_u}
  =\ket{b_{\beta}}_{B_u}\ket{e_{b,\beta}}_{L_u},\\
  &\hat{R}^{u}_{\beta}\ket{\perp}_{\Beve_u}
  =\ket{\perp}_{B_u}\ket{e_{\perp,\beta}}_{L_u},
\end{aligned}
\end{equation}
for $b\in\set{0,1}$ and $\beta\in\set{Z,X}$, with $L_u$ handed to Eve. Bob then checks whether a detection occurred by using a filtering operation $\set{\mathds{1}_{\Beve_u}-\dyad{\perp}_{\Beve_u},\dyad{\perp}_{\Beve_u}}$ and, if so, performs an ideal qubit measurement in basis $\beta$ on $\Bbob_u$. Indeed, since $\hat{R}^{u}_{\beta}$ maps each measurement-basis state $\ket{b_{\beta}}$ to itself tensored with a leakage state, it commutes with Bob's projective measurement in basis $\beta$, so the statistics of Bob's outcomes are unchanged; and, conditioned on a given basis choice $\beta$ and outcome $b$, the leakage system $L_u$ received by Eve is the same in both scenarios.

\paragraph*{Effective phase-error operator.}
Building on the scenario of \cref{fig:leakage_model}(b), we define the virtual phase-error-estimation protocol shown in \cref{fig:leakage_model}(c). It differs only in the detected $Z$ rounds, where Bob's $Z$-basis measurement following the isometry $\hat{R}^{u}_{Z}$ is replaced by an $X$-basis measurement. Because the isometry is applied before the measurement, this replacement leaves $L_u$---and hence Eve's information---untouched, restoring the equivalence to the actual scenario.

This replacement, however, has one important consequence. In the actual $X$-basis test rounds, the preceding isometry is $\hat{R}^{u}_{X}$, which by \cref{eq:leakage_unitaries} commutes with the subsequent $X$-basis measurement and does not affect its outcome. In the virtual $X$-basis measurement of the key rounds, by contrast, the preceding isometry is $\hat{R}^{u}_{Z}$, which correlates the leakage system with Bob's $Z$-basis value; once $L_u$ is traced out, this acts as a dephasing channel in the $Z$ basis, which the phase-error operator of the virtual protocol must account for. Concretely, recall that the phase-error operator is given by
\begin{equation}\label{eq:phase-error-operator}
\begin{split}
    \hat{E}^{\rm ph}_{A\Bbob} & =\dyad{0_X}_{A_u}\otimes\dyad{1_X}_{\Bbob_u} \\ 
    &\;\;+ \dyad{1_X}_{A_u}\otimes\dyad{0_X}_{\Bbob_u}.
\end{split}
\end{equation}
Tracing the leakage system $L_u$ out of $\hat{R}_Z^u$ in \cref{eq:leakage_unitaries} defines an effective channel $\mathcal{E}_\gamma$ on $\Beve_u$ that multiplies the off-diagonal elements of the qubit block by $\gamma$ while leaving the diagonal elements and the vacuum component untouched: on the detection subspace, this is a dephasing channel $\mathcal{E}_\gamma(\rho)=\tfrac{1+\gamma}{2}\rho+\tfrac{1-\gamma}{2}Z\rho Z$. In particular, $\mathcal{E}_\gamma$ is self-adjoint, i.e., $\Tr[M\mathcal{E}_\gamma(\rho)]=\Tr[\mathcal{E}_\gamma(M)\rho]$ for all operators $M$ and states $\rho$. The effective phase-error operator is therefore obtained by applying $\mathcal{E}_\gamma$ to Bob's $X$-basis projectors in \cref{eq:phase-error-operator}, $\mathcal{E}_\gamma(\dyad{b_X}_{\Bbob})=\tfrac{1}{2}\big(\mathds{1}^{\rm det}_{\Beve}+(-1)^b\gamma X_{\Beve}\big)$, resulting in
\begin{equation}\label{eq:phase-error-operator-relation}
    \hat{E}^{{\rm ph},\gamma}_{A\Beve} = (\mathds{I}_A \otimes \mathcal{E}_\gamma)(\hat{E}^{\rm ph}_{A\Bbob})
    = \gamma\,\hat{E}^{\rm ph}_{A\Beve} + \frac{1-\gamma}{2}\,\mathds{1}_{A}\otimes\mathds{1}_{\Beve}^{\rm det}.
\end{equation}

\paragraph*{Phase-error bound.}
Consider that Bob applies the filtering operation $\set{\mathds{1}_{\Beve_u}-\dyad{\perp}_{\Beve_u},\dyad{\perp}_{\Beve_u}}$ to each system $\Beve_u$, which determines whether round $u$ is detected. This allows him to identify the set of detected rounds $\rv{\mathcal{N}}_{\rm det}$, with $\abs{\rv{\mathcal{N}}_{\rm det}}=\Ndet$; from now on, all random variables and events are implicitly conditioned on the observation of $\rv{\mathcal{N}}_{\rm det}$. Alice and Bob then measure the detected systems one by one, in the order of their round indices. Let the binary random variables $\rv{\chi}_{\rm ph}^{u}$ and $\rv{\chi}_{\rm x,er}^{u}$ indicate whether the $u$-th round yields a phase error or an $X$-basis error, respectively, let $\mathcal{F}_{u-1}^{\rm det}$ be a filtration containing all classical information generated before Alice and Bob measure the detected systems of the $u$-th round, and let $\rho_{A_u\Beve_u}^{\mathcal{F}_{u-1}^{\rm det}}$ be the corresponding conditional state. Since the $X$-basis bit-error operator of the test rounds coincides with $\hat{E}^{\rm ph}_{A\Beve}$ (as $\hat{R}^{u}_{X}$ commutes with the $X$-basis measurement), we have $\Pr[\rv{\chi}_{\rm x,er}^{u}|\mathcal{F}_{u-1}^{\rm det}]=\piA{X}\pB{X}\Tr[\hat{E}^{\rm ph}_{A\Beve}\rho_{A_u\Beve_u}^{\mathcal{F}_{u-1}^{\rm det}}]$, while $\Pr[\rv{\chi}_{\rm ph}^{u}|\mathcal{F}_{u-1}^{\rm det}]=\piA{Z}\pB{Z}\Tr[\hat{E}^{{\rm ph},\gamma}_{A\Beve}\rho_{A_u\Beve_u}^{\mathcal{F}_{u-1}^{\rm det}}]$. Combining these with \cref{eq:phase-error-operator-relation}, and noting that the conditional state is supported on the detection subspace, yields
\begin{equation}
\label{eq:round_u_relation_2}
\begin{split}
    \Pr[\rv{\chi}_{\rm ph}^{u}|\mathcal{F}_{u-1}^{\rm det}]
    =\,
    &
    \gamma\frac{p_Z^Ap_Z^B}{p_X^Ap_X^B}\Pr[\rv{\chi}_{\rm x,er}^{u}|\mathcal{F}_{u-1}^{\rm det}] + p_Z^Ap_Z^B\frac{1-\gamma}{2}.
\end{split}
\end{equation}
Summing both sides over the detected rounds and applying Azuma's inequality~\cite{azuma} (see \cref{corollary:azuma} in \cref{app:azuma}) to relate the sums of conditional probabilities to the corresponding sums of random variables, $\rv{N}_{\rm ph}:=\sum_{u\in\rv{\mathcal{N}}_{\rm det}}\rv{\chi}_{\rm ph}^{u}$ and $\rv{N}_{\rm x,er}:=\sum_{u\in\rv{\mathcal{N}}_{\rm det}}\rv{\chi}_{\rm x,er}^{u}$, we obtain that
\begin{equation}
\label{eq:Nph_bound}
\begin{split}
    \rv{N}_{\rm ph}
    \leq\,
    &
    \gamma\frac{p_Z^Ap_Z^B}{p_X^Ap_X^B}\rv{N}_{\rm x,er}
    + p_Z^Ap_Z^B\frac{1-\gamma}{2}\Ndet \\
    &+ \bigg(1+\gamma \frac{p_Z^Ap_Z^B}{p_X^Ap_X^B}\bigg)\Delta_{\rm A}(\Ndet,\epsilon/2)
\end{split}
\end{equation}
holds except with probability at most $\epsilon$, with $\Delta_{\rm A}(n,\epsilon):=\sqrt{2n\ln(\epsilon^{-1})}$. Alternatively, one could employ either Kato's inequality~\cite{kato} or the inequality recently introduced in~\cite{mannalath2026quantum}, which yield better finite-size performance when a priori estimates of the protocol statistics are available. Importantly, even though \cref{eq:Nph_bound} was derived conditioned on a fixed set of detected rounds $\rv{\mathcal{N}}_{\rm det}$, it holds for any realization of $\rv{\mathcal{N}}_{\rm det}$, and therefore unconditionally due to the law of total probability.

In \cref{app:four-state}, we further illustrate how this method can be adapted to different scenarios by considering a simple example of a four-state receiver under a THA~\cite{fung2009dem}.

\paragraph*{A priori leakage of the basis choice.}
We now show how to also accommodate scenarios in which Bob's receiver leaks information about his basis choice \emph{before} he receives system $\Beve_u$. In this situation the detection probability may generally depend on Bob's basis choice, so we shall index the relevant random variables by the transmitted rounds: for each $u=1,\dots,N$, the binary random variables $\rv{\chi}_{\rm ph}^{u}$, $\rv{\chi}_{\rm x,er}^{u}$ and $\rv{\chi}_{\rm sif}^{u}$ indicate whether a phase error, an $X$-basis error, or a sifted-key bit generation event occurs in the $u$-th transmitted round, and $\mathcal{F}_{u-1}$ is a filtration containing all classical information generated before that round. To model the leakage, we assume that, before Eve applies her round-$u$ interaction $\hat{U}^u$ (see \cref{fig:leakage_model}(d)), an additional system $L_{u}'$ is leaked to her in some unknown state $\sigma_{L_{u}'}^{\beta_u}$ that may depend on Bob's upcoming basis choice $\beta_u$, such that $\hat{U}^u:\mathcal{H}_{E_{u-1}}\otimes\mathcal{H}_{L_{u-1}}\otimes\mathcal{H}_{L_{u}'}\to\mathcal{H}_{E_{u}}\otimes\mathcal{H}_{B_{u}}$. The conditional state of systems $A_u\Beve_u$ then depends on $\beta_u$ as
\begin{equation}
\rho_{A_u\Beve_u}^{\beta_u,\mathcal{F}_{u-1}}
=
\Tr_{E_u}\left[\hat{U}^u \left( \rho_{A_uE_{u-1}L_{u-1}}^{\mathcal{F}_{u-1}} \otimes \sigma_{L_{u}'}^{\beta_u} \right)(\hat{U}^{u})^{\dagger}\right],
\end{equation}
for $1<u\leq N$, where $\rho_{A_uE_{u-1}L_{u-1}}^{\mathcal{F}_{u-1}}$ is the conditional state of $A_u$, Eve's system $E_{u-1}$, and the previously leaked systems $L_{u-1}$. 
That is, due to the a priori leakage system $L_u'$, the $Z$-basis and the $X$-basis states are no longer identical. Nevertheless, given that they are close, the $X$-basis error probability can still be a good estimate of the phase-error probability up to a correction term~\footnote{This type of strategy has previously been used to accommodate source imperfections~\cite{pereira2}, where a sufficiently low repetition rate had to be imposed to enforce a specific sequential structure on Eve’s attack and thereby ensure that all relevant operators appearing in the calculations were well defined~\cite{pereiraModifiedBB842023}. Importantly, no such restriction is required here, since the necessary sequential structure arises by construction (see \cref{fig:leakage_model}d), and all the operators involved are well defined.}. In particular, proceeding as before, but keeping track of the basis dependence of the state $\rho_{A_u\Beve_u}^{\beta_u,\mathcal{F}_{u-1}}$, we obtain
\begin{equation}\label{eq:relation_priori_Eph_Ex}
\begin{split}
    \Pr[\rv{\chi}_{\rm ph}^{u}|\mathcal{F}_{u-1}]
    =\,
    &
    \gamma \piA{Z}\pB{Z}\Tr[\hat{E}^{\rm ph}_{A\Beve}\rho_{A_u\Beve_u}^{Z,\mathcal F_{u-1}}]
    \\
    &
    + \frac{1-\gamma}{2} \Pr[\rv{\chi}_{\rm sif}^{u}|\mathcal{F}_{u-1}]
    \\
    \leq\,
    &
    \gamma\piA{Z}\pB{Z}G_{\delta}^{\rm U}\left(\frac{\Pr[\rv{\chi}_{\rm x,er}^{u}|\mathcal{F}_{u-1}]}{\piA{X}\pB{X}}\right)
    \\
    &
    + \frac{1-\gamma}{2} \Pr[\rv{\chi}_{\rm sif}^{u}|\mathcal{F}_{u-1}],
\end{split}
\end{equation}
where the equality follows from \cref{eq:phase-error-operator-relation}, with $\Pr[\rv{\chi}_{\rm sif}^{u}|\mathcal{F}_{u-1}] = \piA{Z}\pB{Z}\Tr[(\mathds{1}_{A}\otimes\mathds{1}^{\rm det}_{\Beve})\rho_{A_u\Beve_u}^{Z,\mathcal F_{u-1}}]$, and the inequality combines Eq.~(B16) of Ref.~\cite{curras2024security_phase} with $\Pr[\rv{\chi}_{\rm x,er}^{u}|\mathcal{F}_{u-1}]=\piA{X}\pB{X}\Tr[\hat{E}^{\rm ph}_{A\Beve}\rho_{A_u\Beve_u}^{X,\mathcal F_{u-1}}]$. Here,
\begin{align}
    G^{\rm U}_{\delta}(p)
    &=
    \begin{cases}
        \begin{aligned}
            &p+(1-\delta)(1-2p) \\
            &\quad + 2\sqrt{\delta(1-\delta)p(1-p)}
        \end{aligned}
        & p<\delta, \\
        1
        & \text{otherwise},
    \end{cases}
\end{align}
and $\delta$ is a parameter satisfying $\delta \leq F(\sigma_{L_{u}'}^Z,\sigma_{L_{u}'}^X)$ for all $u$, which by data processing lower-bounds $F(\rho_{A_u\Beve_u}^{Z,\mathcal{F}_{u-1}},\rho_{A_u\Beve_u}^{X,\mathcal{F}_{u-1}})$. Summing \cref{eq:relation_priori_Eph_Ex} over the $N$ transmitted rounds, upper-bounding the sum of the $G^{\rm U}_{\delta}$ terms by $N\,G^{\rm U}_{\delta}$ of the average via Jensen's inequality (using the concavity and monotonicity of $G^{\rm U}_{\delta}$), and applying Azuma's inequality to each of the resulting sums, we obtain
\begin{equation}\label{eq:priori_Nph_bound}
\begin{split}
    \rv{N}_{\rm ph}
    \leq\,
    &
    \gamma\,\piA{Z}\pB{Z}\,N\,
    G^{\rm U}_{\delta}\!\left(\frac{\rv{N}_{\rm x,er}+\Delta_{\rm A}(N,\epsilon/3)}{N\,\piA{X}\pB{X}}\right)
    \\
    &
    + \frac{1-\gamma}{2}\big(\rv{N}_{\rm sif}+\Delta_{\rm A}(N,\epsilon/3)\big)
    + \Delta_{\rm A}(N,\epsilon/3),
\end{split}
\end{equation}
which holds except with probability at most $\epsilon$, where $\rv{N}_{\rm sif}$ is the number of sifted-key bits. As before, alternative inequalities~\cite{kato,mannalath2026quantum} may replace Azuma's to improve the finite-size performance.

\paragraph*{Results.}
\begin{figure}
    \includegraphics[width=\columnwidth]{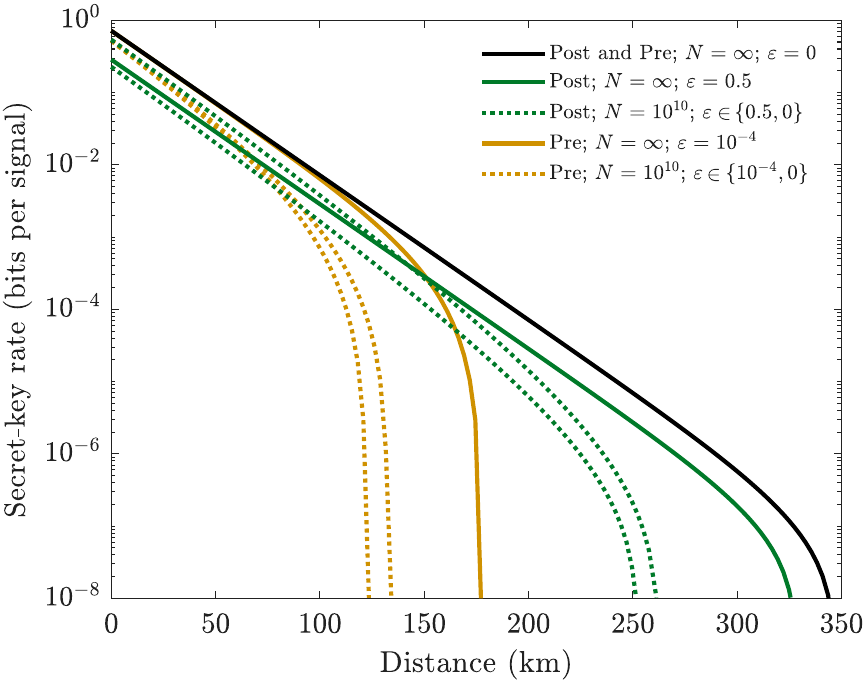}
    \caption{Secret-key rate per signal as a function of the distance for the two leakage scenarios: post-measurement leakage (Post) and a priori leakage of the basis choice (Pre). We consider a total number of transmitted signals $N=10^{10}$ and $N\to\infty$, and vary the leakage strength via the parameter $\epsleak\in\{0,\,10^{-4},\,0.5\}$, with $\epsleak:=1-\gamma^2$ (Post) and $\epsleak:=1-\delta$ (Pre). 
    }
    \label{fig:performance}
\end{figure}

To evaluate the method, we compute the secret-key rate of a single-photon BB84 protocol by considering a typical channel model with loss coefficient $0.2$~dB/km, together with a detection efficiency $\eta_{\rm d}=0.73$ and a dark-count probability $p_{\rm d}=10^{-8}$ per detector \cite{pittaluga2021600}. We optimize the $Z$-basis probability $\piA{Z}=\pB{Z}$ for each distance, and we fix the secrecy and correctness parameters to $\epsilon_{\rm sec}=\epsilon_{\rm cor}=10^{-10}$. The secret key length is computed via $\ell = \rv{N}_{\rm sif}\,[1-h(\rv{N}_{\rm ph}/\rv{N}_{\rm sif})]-\lambda_{\rm EC}-\log_2(2/\epsilon_{\rm cor})-2\log_2\!\big(1/(2\epsilon_{\rm PA})\big)$ \cite{tupkaryPhaseError2025}, where $h(\cdot)$ is the binary entropy function; $\lambda_{\rm EC}=f_{\rm EC}\,\rv{N}_{\rm sif}\,h(\mathrm{QBER})$ is the information revealed during the error correction phase, with $f_{\rm EC}=1.16$ and QBER being the quantum bit-error rate; $\epsilon_{\rm PA}=\epsilon_{\rm sec}/2$, and $\epsilon=(\epsilon_{\rm sec}/4)^2$.

The results are shown in \cref{fig:performance}. We parametrize the two leakage scenarios by a single parameter $\epsleak$, such that $\epsleak=1-\gamma^2$ for post-measurement leakage and $\epsleak=1-\delta$ for a priori leakage \footnote{Note that \cref{eq:priori_Nph_bound} accounts for both leakage mechanisms simultaneously, and therefore depends on $\gamma$ and $\delta$. For the "Pre" curves in \cref{fig:performance}, however, whether one sets $\epsleak=1-\gamma^{2}=1-\delta$---i.e. one considers that both types of leakage are present with the same strength---or removes the post-measurement leakage by setting $\gamma=1$ provides essentially the same results, since a priori leakage dominates the key-rate penalty and both choices yield essentially indistinguishable curves.}. As expected, the post-measurement leakage of Bob's outcomes affects the key rate only weakly, with a noticeable drop appearing solely when the leakage states are strongly distinguishable. A priori leakage of Bob's basis choices is considerably more harmful, as Eve can tailor her attack to exploit this information. For instance, she could perform an unambiguous state discrimination (USD) measurements on the leakage systems, and postselect only those rounds in which the USD measurement succeeds.

\paragraph*{Conclusion.}
We have presented a finite-key security proof based on phase-error estimation for P\&M QKD with leaky receivers that requires only fidelity bounds on the leakage states. Post-measurement leakage of Bob's outcomes increases the phase-error rate only by an additive loss-tolerant penalty, while pre-measurement leakage of his basis choices is considerably more harmful, as it allows Eve to fine-tune her attack strategy. Importantly, the analysis is modular and can be combined with security proofs accounting for both transmitter and detector imperfections.

\paragraph*{Acknowledgments.}
A.N. acknowledges financial support from the Xunta de Galicia (Consellería de Educación, Ciencia, Universidades e Formación Profesional) through a Xunta de Galicia Postdoctoral Fellowship (No. ED481B-2025/113). 
G.C.-L. acknowledges funding from the European Union's Horizon Europe research and innovation programme under the Marie Skłodowska-Curie Postdoctoral Fellowship grant agreement No.\ 101149523. 
A.M. was partially supported by JSPS KAKENHI Grant No. JP24K16977.
M.C. and M.P. acknowledge support from the Galician Regional Government (consolidation of Research Units: AtlantTIC); the Spanish Ministry of Science, Innovation and Universities (MICIU); the Fondo Europeo de Desarrollo Regional (FEDER) through the grant No. PID2024-162270OB-I00; the “Hub Nacional de Excelencia en Comunicaciones Cuánticas” funded by the Ministerio para la Transformación Digital y de la Función Pública and the European Union NextGenerationEU; the European Union’s Horizon Europe Framework Programme under the Marie Sklodowska-Curie Grant No. 101072637 (Project QSI); the project “Quantum Secure Networks Partnership” (QSNP, grant agreement No 101114043); the European Union under the Project IberianQCI (grant 101249593), and the Programa de Cooperación Interreg VI-A España-Portugal (POCTEP) 2021-2027 through the project QUANTUM IBERIA.
K.T. acknowledges support from JSPS KAKENHI Grant Numbers 23K25793 and 23H01096.

\appendix

\section{Extension to non-projective qubit measurements}
\label{app:non-projective}

The main text assumes that Bob performs an ideal projective qubit measurement. Here, we extend the analysis to imperfect non-projective qubit measurements. For simplicity, we omit the round index $u$ throughout this appendix. Concretely, we assume that, in each detected key round, Bob performs a (possibly unknown) two-outcome positive operator-valued measure (POVM) $\set{G^{(Z)}_0,G^{(Z)}_1}$ on a qubit and, upon obtaining outcome $b\in\set{0,1}$, his device leaks the state $\ket{e_{b,Z}}_L$ with $\braket{e_{1,Z}}{e_{0,Z}} = \gamma$. Similarly, in each detected test round, Bob performs a two-outcome POVM $\set{G^{(X)}_0,G^{(X)}_1}$, upon which his device may also leak an outcome-dependent state $\ket{e_{b,X}}_L$. However, as argued in the main text, the form of the states leaked in the test rounds is irrelevant for the security analysis---their outcomes are announced publicly anyway---so this leakage is disregarded in what follows. Like in the main text, without loss of generality, we shall assume that $\gamma$ is real and non-negative (see \cref{app:purification}).

The analysis has three main steps: (1) we show that the POVM can be replaced by a projective measurement plus classical randomness; (2) we show that the leakage of the bit value can be replaced by leakage of the outcome of the projective measurement and the classical randomness, which gives Eve at least as much information as in the actual scenario; and (3) we define a controlled isometry that implements the leakage of the outcome of the projective measurement, and use this to obtain a phase-error relation, in a similar way as in the main text.

\subsection{Step 1: Simulating the POVM by a projective measurement plus randomness}

Since $G^{(Z)}_0$ is Hermitian and acts on a qubit, it admits the decomposition
\begin{equation}
    G^{(Z)}_0=\lambda_0 P_0+\lambda_1 P_1,\qquad P_0+P_1=\mathds{1},
\end{equation}
with eigenvalues $\lambda_j\in[0,1]$ and orthogonal rank-1 projectors $\set{P_0,P_1}$; consequently $G^{(Z)}_1=\mathds{1}-G^{(Z)}_0$ is diagonal in the same basis. Bob can therefore reproduce the outcome $b$ of his POVM $\{G^{(Z)}_0,G^{(Z)}_1\}$ by measuring $\set{P_j}_j$, learning $j$, and assigning a bit $b$ by classical post-processing. We implement this post-processing with an auxiliary random variable $R\in\set{I,F,C_0,C_1}$, drawn independently of everything else with probabilities
\begin{equation}
\begin{gathered}
    p_I=\max(\lambda_0-\lambda_1,0), \quad p_F=\max(\lambda_1-\lambda_0,0),\\
    p_{C_0}=\min(\lambda_0,\lambda_1), \quad p_{C_1}=1-\max(\lambda_0,\lambda_1),
\end{gathered}
\end{equation}
which are non-negative and sum to one. Bob then sets $b$ according to~\cref{tab:table1}.
\begin{table}
\centering
\begin{tabular}{c|cc}
$R$ & $j=0$ & $j=1$ \\
\hline
$I$   & $b=0$ & $b=1$ \\
$F$   & $b=1$ & $b=0$ \\
$C_0$ & $b=0$ & $b=0$ \\
$C_1$ & $b=1$ & $b=1$ \\
\end{tabular}
\caption{Bob's bit selection according to the value of $j$ and $R$.}
\label{tab:table1}
\end{table}
Note that
\begin{equation}
\begin{gathered}
\Pr(b=0 \vert j = 0) = p_I + p_{C_0} = \lambda_0, \\
\Pr(b=0 \vert j = 1) = p_F + p_{C_0} = \lambda_1.
\end{gathered}
\end{equation}
Hence, for any input state $\rho$,
\begin{equation}
    \Pr(b=0\vert\rho)=\sum_j\lambda_j\Tr(P_j\rho)=\Tr(G^{(Z)}_0\rho),
\end{equation}
and $\Pr(b=1|\rho)=1-\Tr(G^{(Z)}_0\rho)=\Tr(G^{(Z)}_1\rho)$, so the statistics of $b$ are identical to those of the original POVM.

\subsection{Step 2: Simulating the leakage}

We now suppose that Bob runs the alternative scenario where he measures the projectors $\set{P_j}_j$ and leaks, instead of $\ket{e_{b,Z}}_L$, both the (classical) value of $R$ and a state $\ket{f_j}_L$ that carries information about the projective outcome $j$, chosen such that
\begin{equation}\label{eq:gram_match}
    \braket{f_1}{f_0}=\braket{e_{1,Z}}{e_{0,Z}}=\gamma.
\end{equation}
Because $\gamma$ is real, there exist isometries $V_I,V_F$ such that
\begin{equation}
    V_I\ket{f_j}=\ket{e_{j,Z}},\qquad V_F\ket{f_j}=\ket{e_{1-j,Z}}.
\end{equation}
Given $R$ and $\ket{f_j}$, Eve can therefore apply the following map controlled by $R$:
\begin{equation}
\begin{aligned}
    R&=I:\ \ket{f_j}\mapsto\ket{e_{j,Z}},      &\qquad R&=F:\ \ket{f_j}\mapsto\ket{e_{1-j,Z}},\\
    R&=C_0:\ \ket{f_j}\mapsto\ket{e_{0,Z}},     &\qquad R&=C_1:\ \ket{f_j}\mapsto\ket{e_{1,Z}},
\end{aligned}
\end{equation}
where for $R\in\set{C_0,C_1}$ she simply discards $\ket{f_j}$ and prepares the indicated state. Comparing with the relabeling table, in every case the output is exactly $\ket{e_{b,Z}}$, and thus this map is such that
\begin{equation}
    (R,\ket{f_j})\longmapsto\ket{e_{b,Z}}.
\end{equation}
Thus Eve can recover the actual leakage $\ket{e_{b,Z}}$ from $(R,\ket{f_j})$, so the alternative scenario gives her at least as much power as the actual one. Thus, proving the security of the alternative scenario suffices to guarantee the security of the actual scenario.

\subsection{Step 3: Isometry and dephasing channel}

Now, similarly to the main text, consider that Bob implements the following scenario, which is equivalent to the scenario just described in Step 2: (1) he applies the isometry
\begin{equation}\label{eq:iso_V}
    \hat{R}_Z\ket{\psi}_{\Beve}=P_0\ket{\psi}_{\Bbob}\ket{f_0}_L+P_1\ket{\psi}_{\Bbob}\ket{f_1}_L,
\end{equation}
which is the analogue of the isometry $\hat{R}^{u}_{Z}$ in \cref{eq:leakage_unitaries}, restricted to the detection subspace, and sends $L$ to Eve; (2) he draws $R$ and sends it to Eve; (3) he measures $\set{P_j}_j$ on system $\Bbob$, learning $j$; (4) he computes $b$ from $(j,R)$. Note that $P_iP_j=\delta_{ij}P_i$ and thus $\hat{R}_Z^\dagger \hat{R}_Z=P_0+P_1=\mathds{1}$, so $\hat{R}_Z$ is indeed an isometry.

Crucially, steps (1)-(2), which fix everything Eve receives, precede the measurement in step (3). Therefore, in the phase-error estimation protocol, we can keep steps (1)-(2) and replace steps (3)-(4) by the two-outcome POVM $\set{G^{(X)}_0,G^{(X)}_1}$ that Bob performs in the test rounds.

Similarly to the main text, we define the effective channel acting on $\Beve$ by tracing $L$ out of \cref{eq:iso_V}. We obtain
\begin{equation}
\begin{split}
   \mathcal{E}_\gamma(\rho)=
   &\ \ev*{\rho}{\tilde 0}\ketbra*{\tilde 0}
   + \ev*{\rho}{\tilde 1}\ketbra*{\tilde 1}\\
   &+ \gamma\,\mel*{\tilde 0}{\rho}{\tilde 1}\ketbra*{\tilde 0}{\tilde 1}
   + \gamma\,\mel*{\tilde 1}{\rho}{\tilde 0}\ketbra*{\tilde 1}{\tilde 0},
\end{split}
\end{equation}
with $P_0=\ketbra*{\tilde 0}$ and $P_1=\ketbra*{\tilde 1}$. This is precisely the dephasing channel $\mathcal{E}_\gamma$ of the main text but now for the $\set{\ket*{\tilde 0},\ket*{\tilde 1}}$ basis. Equivalently, we have that $\mathcal{E}_\gamma(\rho)=\tfrac{1+\gamma}{2}\rho+\tfrac{1-\gamma}{2}\tilde Z\rho \tilde Z$ with $\tilde Z=P_0-P_1$, which implies that $\mathcal{E}_\gamma$ is self-adjoint, i.e., $\mathcal{E}_\gamma^\dagger=\mathcal{E}_\gamma$. Thus,
\begin{equation}
    \Tr[\mathcal{E}_\gamma(\rho)M]=\Tr[\rho\,\mathcal{E}_\gamma(M)]
\end{equation}
for all POVM operators $M$ and states $\rho$.

\subsection{Phase-error relation}

Since $M-\mathcal{E}_\gamma(M)=(1-\gamma)\,M_{\rm off}$, where $M_{\rm off}$ is the off-diagonal part of $M$ in the $\set{\ket*{\tilde 0},\ket*{\tilde 1}}$ basis, and $\norm{M_{\rm off}}_\infty=\abs{M_{\tilde 0\tilde 1}}\le\tfrac12$ for any qubit POVM $M$, with $M_{\tilde 0\tilde 1}:=\bra*{\tilde 0}M\ket*{\tilde 1}$, we obtain the state-independent bound
\begin{equation}\label{eq:norm_bound}
    \norm{M-\mathcal{E}_\gamma(M)}_\infty\le\frac{1-\gamma}{2}.
\end{equation}
In this case, the unmodified phase-error operator has the form
\begin{equation}
\begin{split}
    \hat{E}^{\rm ph}_{A\Bbob} & =\dyad{0_X}_{A}\otimes G^{(X)}_1+ \dyad{1_X}_{A}\otimes G^{(X)}_0.
\end{split}
\end{equation}
Since the leakage in the test rounds can be disregarded, this operator coincides with the $X$-basis bit-error operator of the test rounds. In the presence of leakage in the key rounds, the modified phase-error operator has the form
\begin{equation}
\begin{gathered}
    \hat{E}^{{\rm ph},\gamma}_{A\Beve} = (\mathds{I}_A \otimes \mathcal{E}_\gamma) \big(\hat{E}^{\rm ph}_{A\Bbob}\big) \\
    = \dyad{0_X}_{A}\otimes \mathcal{E}_\gamma(G^{(X)}_1)+ \dyad{1_X}_{A}\otimes \mathcal{E}_\gamma(G^{(X)}_0).
\end{gathered}
\end{equation}
From \cref{eq:norm_bound}, it is easy to show that
\begin{equation}\label{eq:norm_bound_2}
    \norm{\hat{E}^{{\rm ph},\gamma}_{A\Beve} - \hat{E}^{\rm ph}_{A\Beve}}_\infty\le\frac{1-\gamma}{2}.
\end{equation}
We can directly use this to obtain a similar relation as \cref{eq:round_u_relation_2}, namely
\begin{equation}
\begin{split}
    \Pr[\rv{\chi}_{\rm ph}^{u}|\mathcal{F}_{u-1}^{\rm det}]
    \leq\,
    &
    \frac{p_Z^Ap_Z^B}{p_X^Ap_X^B}\Pr[\rv{\chi}_{\rm x,er}^{u}|\mathcal{F}_{u-1}^{\rm det}]
    \\
    &
    + p_Z^Ap_Z^B\frac{1-\gamma}{2}.
\end{split}
\end{equation}

\section{Combination with detector imperfections}
\label{app:detector-imperfections}
Both the analysis in the main text and in \cref{app:non-projective} require that Bob generates his sifted key bits by measuring qubits. This holds when the signals arriving at Bob are single photons, but in general Eve may resend him any number of photons she wishes, and Bob's POVM elements necessarily act on the full Fock space. There are two regimes in which the qubit assumption can nonetheless be recovered.

First, both an active and a symmetric passive BB84 receiver with ideal detectors admit a qubit squashing model~\cite{beaudrySquashingModels2008,gittsovichSquashingModel2014}. This means that Bob's measurement is equivalent to a map that squashes the incoming state to a qubit, followed by a qubit measurement. The same holds when the detectors are not ideal but share a common efficiency and dark count rate, since these common imperfections can be absorbed into the channel~\cite{naharImperfectDetectors2026}. In all these cases our results apply directly after considering the squashing model, provided that---as discussed in the main text---the leakage depends on the physical detection pattern only through the recorded outcome $b'$ and the basis $\beta$, and not on finer-grained information such as the occurrence of a double click.

Second, when the detector efficiencies and dark count rates differ, the qubit squashing model no longer applies. In this case, one can still apply our results by restricting the analysis to the rounds in which Bob receives a single photon and bounding the phase-error rate of those rounds. This is precisely the strategy of Ref.~\cite{wangPhaseError2025}, which proves security of BB84 with an asymmetric passive receiver and imperfect detectors. Below, we highlight the generality and wide applicability of our leakage analysis by showing that it can be plugged directly into the security analysis of Ref.~\cite{wangPhaseError2025}, which differs significantly from the standard BB84 analysis with ideal detectors. For concreteness we shall consider imperfect but memoryless detectors, i.e.\ Bob's POVMs are assumed to be independent across rounds~\footnote{We remark that Ref.~\cite{wangPhaseError2025} also extends its analysis to detectors with memory effects, by discarding detections in rounds preceded by another detection within the correlation length and reducing the security analysis of the non-detected rounds to the memoryless case. Since that reduction employs the same basis-efficiency-mismatch parameter $\delta$, our substitution $\delta\mapsto\delta_\gamma$ should apply unchanged at the operator level. However, the correlated-detector proof of Ref.~\cite{wangPhaseError2025} assumes that only click/no-click information is generated (and announced) during the quantum communication phase. In contrast, in the post-measurement leakage scenario considered here, some information about Bob's outcomes and basis choices becomes available to Eve during the quantum communication phase itself. While we expect this issue to be addressable by making targeted modifications to the analysis in Ref.~\cite{wangPhaseError2025}, a rigorous combination of our detector leakage analysis with the correlated-detector analysis of Ref.~\cite{wangPhaseError2025} is outside the scope of this work.}.

In the construction of Ref.~\cite{wangPhaseError2025} (see Fig.~4 in~\cite{wangPhaseError2025}), once one conditions on the rounds in which Bob receives a single photon and registers a single click in the $Z$ branch, his final key-generating measurement for those rounds is a two-outcome POVM $\set{G^{(Z)}_0,G^{(Z)}_1}$ producing the bit $b\in\set{0,1}$. This POVM is in general non-projective, but acts on a qubit since Bob received a single photon. Now suppose Bob's setup additionally leaks a state $\ket{e_{b,Z}}_L$ carrying information about $b$, with $\braket{e_{1,Z}}{e_{0,Z}}=\gamma$. This is exactly the situation analysed in \cref{app:non-projective}: the leakage is equivalent to applying a dephasing channel $\mathcal{E}_\gamma$ to Bob's qubit before his final measurement. Consequently, in the phase-error estimation protocol, the virtual $X$-basis measurement $\set{\tilde G^{(1)}_{(X,=)},\tilde G^{(1)}_{(X,\neq)}}$ used in~\cite{wangPhaseError2025} to check for phase errors in the key rounds must be replaced by its dephased version $\set{\mathcal{E}_\gamma(\tilde G^{(1)}_{(X,=)}),\mathcal{E}_\gamma(\tilde G^{(1)}_{(X,\neq)})}$, where $\mathcal{E}_\gamma$ acts on Bob's subsystem (i.e.\ $\mathcal{E}_\gamma \equiv \mathds{1}_A\otimes\mathcal{E}_\gamma$).

Thus, to incorporate detector leakage into the analysis of Ref.~\cite{wangPhaseError2025} one simply needs to substitute
\begin{equation}
\label{eq:map_leakage}
  \tilde G_{(X,\neq)}^{(1)} \;\longmapsto\; \mathcal{E}_\gamma\big(\tilde G_{(X,\neq)}^{(1)}\big),
\end{equation}
wherever $\tilde G_{(X,\neq)}^{(1)}$ refers to the \emph{phase-error} measurement on the key rounds (the bit-error measurement on the test rounds is left unchanged). In particular, the parameter controlling the phase-error rate in Ref.~\cite{wangPhaseError2025},
\begin{equation}
\delta := \left\|
\sqrt{\tilde{F}_{X}^{(1)}}\,\tilde{G}_{(X,\neq)}^{(1)}\,
\sqrt{\tilde{F}_{X}^{(1)}}
-
a\,\sqrt{\tilde{F}_{Z}^{(1)}}\,\tilde{G}_{(X,\neq)}^{(1)}\,
\sqrt{\tilde{F}_{Z}^{(1)}}
\right\|_{\infty},
\end{equation}
becomes, under the substitution in \cref{eq:map_leakage},
\begin{equation}
\begin{gathered}
\delta_\gamma := \bigg\|
\sqrt{\tilde{F}_{X}^{(1)}}\,\tilde{G}_{(X,\neq)}^{(1)}\,
\sqrt{\tilde{F}_{X}^{(1)}}  \qquad \qquad
\\
\qquad \qquad -
a\,\sqrt{\tilde{F}_{Z}^{(1)}}\,\mathcal{E}_\gamma\big(\tilde{G}_{(X,\neq)}^{(1)}\big)\,
\sqrt{\tilde{F}_{Z}^{(1)}}
\bigg\|_{\infty}.
\end{gathered}
\end{equation}
As proven below, these two quantities are related by
\begin{equation}\label{eq:delta_gamma_bound}
    \delta_\gamma \;\leq\; \delta + a\,\frac{1-\gamma}{2}.
\end{equation}
Thus, one can incorporate leakage simply by substituting $\delta\mapsto\delta_\gamma$
in the phase-error rate bound of Ref.~\cite{wangPhaseError2025}.

\begin{proof}[Proof of \cref{eq:delta_gamma_bound}]
Let
$T := \sqrt{\tilde F^{(1)}_X}\,\tilde G^{(1)}_{(X,\neq)}\,\sqrt{\tilde F^{(1)}_X}$ and
$P := \sqrt{\tilde F^{(1)}_Z}\,\tilde G^{(1)}_{(X,\neq)}\,\sqrt{\tilde F^{(1)}_Z}$,
such that $\delta=\norm{T-aP}_\infty$. Setting
$\Delta := \tilde G^{(1)}_{(X,\neq)}-\mathcal{E}_\gamma(\tilde G^{(1)}_{(X,\neq)})$
and using the triangle inequality,
\begin{equation}
\begin{gathered}
    \delta_\gamma
    = \norm{(T-aP) + a\,\sqrt{\tilde F^{(1)}_Z}\,\Delta\,\sqrt{\tilde F^{(1)}_Z}}_\infty \\
    \leq \delta + a\,\norm{\sqrt{\tilde F^{(1)}_Z}\,\Delta\,\sqrt{\tilde F^{(1)}_Z}}_\infty.
\end{gathered}
\end{equation}
Since $\tilde G^{(1)}_{(X,\neq)}$ is a POVM element, $0\le \tilde G^{(1)}_{(X,\neq)}\le\mathds{1}$, and from
\cref{eq:norm_bound} it follows that  $\norm{\Delta}_\infty\le\tfrac{1-\gamma}{2}$. Thus, $-\tfrac{1-\gamma}{2}\mathds{1}\le\Delta\le\tfrac{1-\gamma}{2}\mathds{1}$, and
conjugating by $\sqrt{\tilde F^{(1)}_Z}$ (which preserves operator ordering) gives
\begin{equation*}
    \norm{\sqrt{\tilde F^{(1)}_Z}\,\Delta\,\sqrt{\tilde F^{(1)}_Z}}_\infty
    \leq \frac{1-\gamma}{2}\,\norm{\tilde F^{(1)}_Z}_\infty
    \leq \frac{1-\gamma}{2},
\end{equation*}
where the last inequality uses $\tilde F^{(1)}_Z\le\mathds{1}$, as
$\set{\tilde F^{(1)}_X,\tilde F^{(1)}_Z}$ is a POVM. Combining the two equations above,
we obtain  \cref{eq:delta_gamma_bound}.
\end{proof}
%

\section{Correlations in the emitted leakage}
\label{app:correlations}
In the main text, the leakage emitted in round $u$ depends only on the current outcome $b'\in\set{\perp,0,1}$, the basis $\beta$, and the round index. In practice, a receiver may also emit signals in the rounds that follow a detection, so that the bit recorded in round $u$ keeps influencing the emissions of subsequent rounds. Suppose, for simplicitly, that these correlations have a finite range $L_c$, i.e., the emitted leakage state at round $u$ may depend on the outcomes and settings of rounds $u-L_c,\dots,u$. This scenario can be straightforwardly accommodated at the protocol level by requiring that Bob gates his detectors off during the $L_c$ rounds following each detection. Then no bit value is generated within the correlation range of a detected round, and the whole emission triggered by round $u$---i.e., the state emitted in that round together with those emitted in the $L_c$ subsequent, inactive rounds---depends on the recorded bit $b$ but not on prior or later outcomes. Collecting these emissions into a single system $\tilde{L}_u := L_uL_{u+1}\cdots L_{u+L_c}$, the analysis of the main text applies directly: the leakage is again described by the isometry of \cref{eq:leakage_unitaries}, now acting on $\tilde{L}_u$, and \cref{eq:Nph_bound} holds with the assumption of \cref{eq:leakage-assumption} imposed on the composite states, $F(\sigma^{(0,Z)}_{\tilde{L}_u},\sigma^{(1,Z)}_{\tilde{L}_u})\geq\gamma^2$.

The price of this countermeasure is twofold, and in both cases mild. First, the certifiable overlap may degrade, since $\gamma$ now bounds the distinguishability of the entire emission sequence rather than that of a single round. In practice, however, one expects the bulk of the emission to accompany the detection event itself, so that $\gamma$ is dominated by the overlap of the states emitted in the detected round. Moreover, the gating process may suppress part of the memory effects. Note that the correlation length characterized on a free-running receiver also accounts for correlations mediated by subsequent detection events, which would not occur while the detectors are inactive. Using such a value of $L_c$ as the gating window is therefore conservative. Second, gating discards rounds: if $p_{\rm det}$ denotes the detection probability of an active round, a fraction $\approx 1/(1+L_c\,p_{\rm det})$ of the rounds remain active. Crucially, this fraction approaches one in practical regimes, where $p_{\rm det}\ll 1/L_c$, so the sifted-key length is essentially unaffected and the loss tolerance of the analysis is preserved. A similar strategy has been employed in Ref.~\cite{wangPhaseError2025} to handle memory effects in the detectors, there by discarding in post-processing the detections occurring within the correlation length of a previous one. 

In fact, we note that our countermeasure can also be implemented in post-processing, for receivers that do not support gating. For this, in the actual protocol Alice and Bob discard every detection occurring within $L_c$ rounds of a previously kept one, so that the kept detections are again separated by more than the correlation length. In the virtual protocol, the discarded rounds are measured exactly as in the actual protocol, and only the kept ones are virtualized. The isometry associated with a kept round $u$ must then account for the emissions of the whole window, and is now controlled by $b$ jointly with the outcomes and settings of the surrounding rounds, on which the emissions may also depend~\footnote{As in the main text, this isometry is not an actual step of the protocol, but a rewriting of the window dynamics in the controlled form $\hat{S}^{u}=\sum_{b}\dyad{b_Z}_{\Beve_u}\otimes\hat{K}_b$, where $\hat{K}_b$ collects everything that happens in rounds $u,\dots,u+L_c$ once the value $b$ is recorded. This includes Bob's measurements in the discarded rounds, the emissions they trigger, and Eve's interactions in between. Such a rewriting is legitimate because $\hat{S}^{u}$ commutes with Bob's $Z$-basis measurement of that round and the reordering argument of the main text applies. One might object that, strictly speaking, the emitted leakage systems may not survive, since Eve's operations take them as inputs, and by the end of the window they have been merged into her side information (see \cref{fig:leakage_model}). However, this is not a problem. The system $\tilde{L}_u$ simply denotes the emissions as they leave the receiver, which is a well-defined object and, in any case, the natural target of a device characterization. What Eve subsequently does with them cannot increase their distinguishability, so bounding the fidelity at the point of emission is enough.}. Denoting by $h_u$ any record of the rounds $u-L_c,\dots,u-1,u+1,\dots,u+L_c$, the analysis of the main text is then recovered provided that

\begin{equation}
\label{eq:leakage-assumption-window}
F\big(\sigma^{(0,Z,h_u)}_{\tilde{L}_u},\sigma^{(1,Z,h_u)}_{\tilde{L}_u}\big)\geq\gamma^2
\end{equation}
holds for every kept round $u$ and every record $h_u$. Note that the reduction of \cref{app:purification} is now applied separately to each pair $(u,h_u)$ so that the overlap equals $\gamma$ on every branch. 
The resulting bound is provided by \cref{eq:Nph_bound} with the round counts---$\Ndet$, $\rv{N}_{\rm x,er}$---referring to the kept detections, so the performance essentially matches that of the gated implementation. The price is a more demanding device characterization, as \cref{eq:leakage-assumption-window} must be certified over records that do contain intermediate detections.

\section{Reduction to pure leakage states with a fixed overlap}
\label{app:purification}

Here we justify the claim in the main text that, without loss of generality, the leakage states can be taken to be pure, with the two $Z$-basis states having overlap exactly $\gamma$, real and non-negative. The construction below is performed separately for each round $u$.

The form of the states $\sigma_{L_u}^{(0,X)}$, $\sigma_{L_u}^{(1,X)}$ and $\sigma_{L_u}^{(\perp,\beta)}$ does not affect the security proof, so we can take them to be pure without loss of generality. As for the $Z$-basis leakage states $\sigma_{L_u}^{(0,Z)}$ and $\sigma_{L_u}^{(1,Z)}$, let $\gamma'_u := \sqrt{F(\sigma_{L_u}^{(0,Z)},\sigma_{L_u}^{(1,Z)})}$. By Uhlmann's theorem~\cite{Uhlmann1976TransitionProbability}, there exist purifications $\{\ket*{e'_{b,Z}}_{L_uP_u}\}_b$ of $\{\sigma_{L_u}^{(b,Z)}\}_b$ such that $\vert\braket*{e'_{0,Z}}{e'_{1,Z}}_{L_uP_u}\vert = \gamma'_u$ and, by exploiting the freedom to choose their global phases, we can take $\braket*{e'_{0,Z}}{e'_{1,Z}}_{L_uP_u} = \gamma'_u$ to be real and non-negative. Moreover, since $\gamma'_u \geq \gamma$, we can introduce an additional fictitious ancilla $F_u$ with normalized states $\ket{f_0}_{F_u}$ and $\ket{f_1}_{F_u}$ satisfying $\braket{f_0}{f_1}=\gamma/\gamma'_u$, and then define $\ket*{e_{b,Z}}_{L_uP_uF_u}:=\ket*{e'_{b,Z}}_{L_uP_u}\otimes\ket{f_b}_{F_u}$. These states have overlap exactly $\braket{e_{0,Z}}{e_{1,Z}}=\gamma$, while leaving the reduced state on the leakage system $L_u$ unchanged, $\Tr_{P_uF_u}\!\bigl[\dyad*{e_{b,Z}}\bigr]=\Tr_{P_u}\!\bigl[\dyad*{e'_{b,Z}}\bigr]=\sigma_{L_u}^{(b,Z)}$. Since granting Eve the purifying systems $P_uF_u$ can only increase her information, proving security in this scenario suffices. Renaming $L_uP_uF_u\to L_u$, we may therefore restrict our analysis to pure $Z$-basis leakage states $\sigma_{L_u}^{(b,Z)}=\dyad*{e_{b,Z}}_{L_u}$ with $\braket{e_{0,Z}}{e_{1,Z}}=\gamma$. Importantly, note that this is not a relaxation of the original condition in \cref{eq:leakage-assumption}, since the constructed states $\ket*{e_{b,Z}}_{L_u}$ themselves satisfy \cref{eq:leakage-assumption} with equality.

We remark that the reduction to an overlap of \emph{exactly} $\gamma$---rather than merely \textit{at least} $\gamma$---is what renders the effective channel $\mathcal{E}_\gamma$, and hence the effective phase-error operator in \cref{eq:phase-error-operator-relation}, independent of the round index.

\section{Concentration bounds for sums of dependent random variables}
\label{app:azuma}
\begin{lemma}[Azuma-Hoeffding inequality~\cite{azuma}]\label{lemma:azuma}
    Let $(\rv{M}_u)_{u \geq 0}$ be a martingale with respect to a filtration $(\mathcal{F}_u)_{u \geq 0}$, i.e.,
    \[
    \mathbb{E}[\rv{M}_{u+1} \mid \mathcal{F}_u] = \rv{M}_u \quad \text{for all } u \geq 0,
    \]
    and suppose that the martingale differences are bounded, that is, there exist constants $c_u$ such that
    \[
    |\rv{M}_u - \rv{M}_{u-1}| \leq c_u \quad \text{almost surely for all } u.
    \]
    Then, for any positive integer $N$, and $\epsilon > 0$,
    \[
    \Pr[\rv{M}_N - \rv{M}_0 \geq \Delta_{\rm A}^c(N,\epsilon)] \leq \epsilon,
    \]
    \[
    \Pr[\rv{M}_N - \rv{M}_0 \leq -\Delta_{\rm A}^c(N,\epsilon)] \leq \epsilon,
    \]
    where
    \[
    \Delta_{\rm A}^c(N,\epsilon) = \sqrt{2\sum_{u=1}^N c_u^2 \ln\left(1/\epsilon\right)}.
    \]
\end{lemma}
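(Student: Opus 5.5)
The plan is the standard exponential-moment (Chernoff) argument applied to the martingale differences, using Hoeffding's lemma conditionally at each step. Write $\rv{D}_u := \rv{M}_u - \rv{M}_{u-1}$. The martingale property gives $\mathbb{E}[\rv{D}_u\mid\mathcal{F}_{u-1}]=0$, and by assumption $\abs{\rv{D}_u}\le c_u$ almost surely. For any $t>0$, Markov's inequality applied to $e^{t(\rv{M}_N-\rv{M}_0)}$ yields
\begin{equation*}
\Pr[\rv{M}_N-\rv{M}_0\ge a]\le e^{-ta}\,\mathbb{E}\Big[\prod_{u=1}^N e^{t\rv{D}_u}\Big].
\end{equation*}
The first step is therefore to control this moment generating function.

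I will use the conditional Hoeffding lemma: if a random variable $Y$ satisfies $\mathbb{E}[Y\mid\mathcal{G}]=0$ and $\abs{Y}\le c$, then $\mathbb{E}[e^{tY}\mid\mathcal{G}]\le e^{t^2c^2/2}$. This follows from convexity of $y\mapsto e^{ty}$ on $[-c,c]$. Writing $e^{ty}\le \frac{c-y}{2c}e^{-tc}+\frac{c+y}{2c}e^{tc}$ and taking the conditional expectation gives $\cosh(tc)$, and $\cosh(x)\le e^{x^2/2}$ follows by comparing Taylor series term by term, since $(2k)!\ge 2^k k!$.

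Next I peel off one factor at a time with the tower property. Since $\prod_{u<N}e^{t\rv{D}_u}$ is $\mathcal{F}_{N-1}$-measurable,
\begin{equation*}
\mathbb{E}\Big[\prod_{u=1}^N e^{t\rv{D}_u}\Big]=\mathbb{E}\Big[\prod_{u=1}^{N-1} e^{t\rv{D}_u}\,\mathbb{E}[e^{t\rv{D}_N}\mid\mathcal{F}_{N-1}]\Big]\le e^{t^2c_N^2/2}\,\mathbb{E}\Big[\prod_{u=1}^{N-1} e^{t\rv{D}_u}\Big].
\end{equation*}
Induction then gives the bound $\exp(t^2\sum_u c_u^2/2)$. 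This conditioning step, which makes the argument valid for dependent increments, is the only point requiring care. The bounds $c_u$ are deterministic constants, so they can be pulled out of the expectation.

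Finally I optimize over $t$. Setting $S=\sum_u c_u^2$, the bound reads $\exp(-ta+t^2S/2)$, which is minimized at $t=a/S$ and gives $\exp(-a^2/(2S))$. Choosing $a=\Delta_{\rm A}^c(N,\epsilon)=\sqrt{2S\ln(1/\epsilon)}$ makes this equal to $\epsilon$, which proves the first inequality. For the second inequality, $-\rv{M}_u$ is also a martingale with the same difference bounds, so applying the first inequality to it gives the lower-tail bound. No step is a real obstacle; the only delicate point is verifying the conditional Hoeffding lemma with conditional rather than unconditional expectations.
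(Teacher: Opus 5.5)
Your proof is correct. It is the standard Chernoff--Hoeffding argument: Markov's inequality on $e^{t(\rv{M}_N-\rv{M}_0)}$, the conditional Hoeffding lemma via convexity and $\cosh x\le e^{x^2/2}$, peeling off factors with the tower property, and optimizing at $t=a/S$. This is the classical proof behind Ref.~\cite{azuma}; the paper itself gives no proof and simply quotes the result.

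The only caveat is the degenerate case $S=\sum_u c_u^2=0$. There your choice $t=a/S$ is undefined, and in fact the stated bound fails for $\epsilon<1$, since $\Delta_{\rm A}^c=0$ while $\Pr[\rv{M}_N-\rv{M}_0\ge 0]=1$. So one should assume $S>0$ and $\epsilon\in(0,1]$, which holds in the paper's application with $c_u=1$.
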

\begin{corollary}\label{corollary:azuma}
   Let $\rv{X}_1,\dots,\rv{X}_N$ be a sequence of random variables and $(\mathcal{F}_u)_{u \geq 0}$ a filtration. Define $\rv{S}_k:=\sum_{u=1}^k \rv{X}_{u}$ and $\rv{E}_{k}^{\mathcal{F}}:=\sum_{u=1}^{k}\mathbb{E}[\rv{X}_{u}|\mathcal{F}_{u-1}]$. If $0\leq \rv{X}_u \leq 1$ for all $u$, then
\begin{equation}\label{eq:azumaCor}
\begin{split}
    \Pr[\rv{S}_N \geq \rv{E}_{N}^{\mathcal{F}} + \Delta_{\rm A}(N,\epsilon)] & \leq \epsilon,
    \\
    \Pr[\rv{S}_N \leq \rv{E}_{N}^{\mathcal{F}} - \Delta_{\rm A}(N,\epsilon)] & \leq \epsilon,
\end{split}
\end{equation}
where $\Delta_{\rm A}(N,\epsilon) = \sqrt{2N \ln\left(1/\epsilon\right)}$.
\end{corollary}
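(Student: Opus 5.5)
The plan is to reduce the corollary to Lemma~\ref{lemma:azuma} by building the natural martingale from the centred increments. For $k\ge 0$, define
\[
\rv{M}_k := \rv{S}_k - \rv{E}_k^{\mathcal{F}} = \sum_{u=1}^{k}\big(\rv{X}_u - \mathbb{E}[\rv{X}_u|\mathcal{F}_{u-1}]\big),
\]
with $\rv{M}_0=0$. We need each $\rv{X}_u$ to be $\mathcal{F}_u$-measurable, i.e.\ adapted to the filtration. In the applications this holds, since $\mathcal{F}_u$ contains all classical information generated up to round $u$, and I will state it as the implicit standing assumption. Then $\mathbb{E}[\rv{X}_u|\mathcal{F}_{u-1}]$ is $\mathcal{F}_{u-1}$-measurable, so $\rv{M}_k$ is $\mathcal{F}_k$-measurable. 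The tower property gives $\mathbb{E}[\rv{M}_{k+1}-\rv{M}_k|\mathcal{F}_k]=\mathbb{E}[\rv{X}_{k+1}|\mathcal{F}_k]-\mathbb{E}[\rv{X}_{k+1}|\mathcal{F}_k]=0$, so $(\rv{M}_k)$ is a martingale. Integrability is automatic because everything is bounded.

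Next I bound the increments. We have $\rv{M}_u-\rv{M}_{u-1}=\rv{X}_u-\mathbb{E}[\rv{X}_u|\mathcal{F}_{u-1}]$. Since $0\le\rv{X}_u\le 1$, the conditional expectation also lies in $[0,1]$ almost surely. The difference of two numbers in $[0,1]$ has absolute value at most $1$, so we may take $c_u=1$. Then $\Delta_{\rm A}^c(N,\epsilon)=\sqrt{2N\ln(1/\epsilon)}=\Delta_{\rm A}(N,\epsilon)$.

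Applying the two tails of Lemma~\ref{lemma:azuma} to $\rv{M}_N-\rv{M}_0=\rv{S}_N-\rv{E}_N^{\mathcal{F}}$ then gives the two inequalities in \cref{eq:azumaCor} directly.

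There is no real obstacle. The only delicate point is the adaptedness and measurability bookkeeping, which makes $\rv{M}_k$ a genuine martingale. One could sharpen the constant to $c_u=1/2$ via Hoeffding's lemma, since the increment lies in an interval of length $1$, but the stated form only needs $c_u=1$.
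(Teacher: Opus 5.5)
Your proposal is correct and is essentially the paper's proof: the paper also defines $\rv{M}_k:=\rv{S}_k-\rv{E}_k^{\mathcal{F}}$, notes that it is a martingale with respect to $(\mathcal{F}_u)_{u\geq 0}$, and applies Lemma~\ref{lemma:azuma}. Your two added checks make explicit what the paper leaves implicit: that $\rv{X}_u$ must be $\mathcal{F}_u$-measurable for the martingale property to hold, and that the increments are bounded by $c_u=1$, so that $\Delta_{\rm A}^c(N,\epsilon)=\Delta_{\rm A}(N,\epsilon)$.
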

\begin{proof}
The random variables $\rv{M}_k:=\rv{S}_k-\rv{E}_{k}^{\mathcal{F}}$ form a martingale with respect to $(\mathcal{F}_u)_{u \geq 0}$; applying \cref{lemma:azuma} then yields~\cref{eq:azumaCor}.
\end{proof}
%

\section{Leaky four-state-Bob receivers with detection-efficiency mismatch}
\label{app:four-state}
Receivers in which Bob randomly interchanges the roles of his two detectors in each round have been proposed as a countermeasure against detector-side imperfections, most notably the detection-efficiency mismatch (DEM) problem~\cite{makarov2006effects,qi2005time,fung2009dem,maroySecurityQuantum2010}. In these so-called ``four-state Bob'' schemes, a click in detector ${\rm D}_m$, with $m\in\set{0,1}$, is recorded as the bit $b=m\oplus r$, where $r\in\set{0,1}$ is a random bit that controls a rotation of the input state that effectively interchanges the two detectors. The Achilles' heel of such schemes is that $r$ is encoded using an active modulator, which may suffer from information leakage---e.g., via a THA~\cite{jain2014trojan}. Here, as a particular example, we extend the techniques introduced in the main text to analyze the security of receivers of this type in the presence of a THA that targets the modulators used to encode the random bit $r$ and the basis $\beta$. For simplicity, we restrict ourselves to signals containing at most one photon, as in many previous analyses~\cite{fung2009dem,maroySecurityQuantum2010}, and neglect dark counts. We note, however, that these restrictions could be removed e.g. by combining our technique with the recent methods introduced in~\cite{tupkaryPhaseError2025}.

We consider that the single photon entering Bob's receiver carries the qubit degree of freedom $\Bbob_u$ that encodes the bit---e.g., polarization---together with all remaining degrees of freedom---arrival time, frequency, spatial mode---collected in an auxiliary system $D_u$ with basis states $\ket{d}_D$. That is, the single-photon subspace factorizes as $\mathcal{H}_{\Bbob}\otimes\mathcal{H}_{D}$, and Eve may prepare any state within this subspace, including states entangling the qubit with the modes. Importantly, we allow each detector ${\rm D}_m$ to have an arbitrary unknown detection efficiency profile of the form $\eta_m(d)\in[0,1]$ and we encode each profile in a positive operator on $D_u$, $E^{m}_{D}:=\sum_{d}\eta_{m}(d)\,\dyad{d}_{D}$, with $0\leq E^{m}_{D}\leq\identity$. The POVM element for a detection with recorded bit $b\in\set{0,1}$, conditioned on $(\beta,r)$, is then $F^{b|\beta,r}_{\Bbob D}=\dyad{b_{\beta}}_{\Bbob}\otimes E^{b\oplus r}_{D}$, and the three-outcome POVM is completed by the no-detection element $\identity_{\Bbob D}-\sum_{b}F^{b|\beta,r}_{\Bbob D}$.

Similarly to the main text, we consider that, after his measurement, Bob's receiver emits a system $L_u'$ in a state $\sigma_{L'}^{(\beta,r)}$ that may depend on both settings, and we assume only that, for all rounds $u$,
\begin{equation}
\label{eq:fourstate-assumption}
  F\big(\sigma_{L'}^{(Z,0)},\sigma_{L'}^{(Z,1)}\big)\geq\gamma^2,
\end{equation}
for some $\gamma>0$. Pessimistically handing Eve a purifying system of each emitted state and applying the reduction of \cref{app:purification}, we take the leakage states to be pure, i.e., $\sigma_{L'}^{(\beta,r)}=\dyad*{w^{\beta}_{r}}_{L'}$, with $\braket*{w^{Z}_{1}}{w^{Z}_{0}}=\gamma\geq 0$. Similarly to the main text, the leakage emission is modeled with an isometry
\begin{equation}\label{eq:leakage_isometry_4stateBob}
    \hat{W}^{\beta}\ket{r}_{R}=\ket{r}_{R}\ket*{w^{\beta}_{r}}_{L'}
\end{equation}
that this time is applied to a quantum coin $R_u$ that Bob initializes in the state $\ket{+}_{R}$ and uses to decide $r$ by performing a $Z$-basis measurement. This isometry commutes with Bob's coin measurement, so we can consider that $\hat{W}^{\beta}$ is applied upon reception.

Now, we split Bob's $\beta$-basis measurement into its announced part---the detection flag, which must remain identical in the virtual protocol---and its private part---which determines Bob's final outcomes. In particular, we define the detection operator
\begin{equation}
\label{eq:fourstate-filter}
\begin{split}
  N^{\beta}_{\Bbob DR}&:=\sum_{b,r}\dyad{b_{\beta}}_{\Bbob}\otimes E^{b\oplus r}_{D}\otimes\dyad{r}_{R}
  \\
  &=\bar{E}_{D}\otimes\identity_{\Bbob R}+\frac{1}{2}\,\Delta E_{D}\otimes Z^{\beta}_{\Bbob}\otimes Z_{R},
\end{split}
\end{equation}
where $Z^{\beta}_{\Bbob}:=\dyad{0_{\beta}}-\dyad{1_{\beta}}$, $Z_R:=\dyad{0}-\dyad{1}$, and we have introduced the mean and mismatch operators 
\begin{equation}
\bar{E}_{D}:=\tfrac{1}{2}\big(E^{0}_{D}+E^{1}_{D}\big)
\quad\text{ and }\quad
\Delta E_{D}:=E^{0}_{D}-E^{1}_{D},
\end{equation}
respectively. Since $N^{\beta}_{\Bbob DR}$ is diagonal in the basis $\set{\ket{b_\beta}\ket{d}\ket{r}}_{b,d,r}$, Bob's actual detector-and-coin measurement in a $\beta$-basis round decomposes as the two-outcome filter $\set{N^{\beta}_{\Bbob DR},\identity-N^{\beta}_{\Bbob DR}}$, with Kraus operator $\sqrt{N^{\beta}_{\Bbob DR}}$, yielding the announced flag, followed by the projective readout of $(Z^{\beta}_{\Bbob},Z_R)$, which produces the private outcomes $(b,r)$. 

In the virtual protocol, Bob can replace, in the detected key rounds, the final projective readout by any virtual measurement to compute the phase error rate. In particular, here we take the virtual observable $X_{\Bbob}\otimes X_R$, which commutes with the detection operator $N^{Z}_{\Bbob DR}$. Phase errors are therefore defined as disagreements between Alice's virtual $X$-basis outcome and Bob's $X_{\Bbob}\otimes X_{R}$ outcome. In particular, we define the operators
\begin{equation}
\label{eq:fourstate-operators}
\begin{gathered}
  \hat{O}^{\rm ph}_{A\Bbob DR}:=\sqrt{N^{Z}_{\Bbob DR}}\,\hat{E}^{\rm ph}_{A\Bbob DR}\sqrt{N^{Z}_{\Bbob DR}}=\hat{E}^{\rm ph}_{A\Bbob DR}\,N^{Z}_{\Bbob DR},
  \\[2pt]
  \hat{E}^{\rm ph}_{A\Bbob DR}:=\tfrac{1}{2}\big(\identity-X_A\otimes X_{\Bbob}\otimes X_R\big)\otimes\identity_D,
  \\[6pt]
  \hat{O}^{\rm x}_{A\Bbob DR}:=\sqrt{N^{X}_{\Bbob DR}}\,\hat{E}^{\rm x}_{A\Bbob DR}\sqrt{N^{X}_{\Bbob DR}}=\hat{E}^{\rm x}_{A\Bbob DR}\,N^{X}_{\Bbob DR},
  \\[2pt]
  \hat{E}^{\rm x}_{A\Bbob DR}:=\tfrac{1}{2}\big(\identity-X_A\otimes X_{\Bbob}\big)\otimes\identity_{DR},
\end{gathered}
\end{equation}
where we use $\big[N^{Z}_{\Bbob DR},\,X_{\Bbob}\otimes X_{R}\big]=0$ and $[X_\Bbob,Z^X_\Bbob]=0$. Then, conditioned on the record $\mathcal{F}_{u-1}$, with $\rho\equiv\rho^{\mathcal{F}_{u-1}}_{A_u\Bbob_uD_uR_u}$ being the round-$u$ conditional state, the conditional phase-error and $X$-basis-error probabilities are given by 
\begin{equation}
\begin{split}
    \Pr[\rv{\chi}_{\rm ph}^{u}|\mathcal{F}_{u-1}]&=\piA{Z}\pB{Z}\Tr[\hat{O}^{\rm ph}_{A\Bbob DR}\,\rho],
    \\
    \Pr[\rv{\chi}_{\rm x,er}^{u}|\mathcal{F}_{u-1}]&=\piA{X}\pB{X}\Tr[\hat{O}^{\rm x}_{A\Bbob DR}\,\rho].
\end{split}
\end{equation}

Now, note that Eve's round-$u$ decision on the state that is forwarded to Bob is independent of his settings $(\beta,r)$.
Thus, after tracing out $L_u'$ and Eve's system $E_u$, the state of the remaining systems in a $\beta$-basis round has the product form $\rho=\sigma_{A\Bbob D}\otimes\rho^{\beta}_{R}$, where $\sigma_{A\Bbob D}$ is arbitrary but common to the two basis choices, and $\rho^{\beta}_{R}:=\Tr_{L'}\big[\hat{W}^{\beta}\dyad{+}_{R}\hat{W}^{\beta\,\dagger}\big]$. In particular, we have that $\rho^{Z}_{R}=\tfrac{1}{2}\big(\identity_{R}+\gamma\,X_{R}\big)$. Expanding \cref{eq:fourstate-operators} with \cref{eq:fourstate-filter} and using $\Tr[Z_{R}\,\rho^{\beta}_{R}]=0$, $\Tr[X_{R}Z_{R}\,\rho^{Z}_{R}]=0$ and $\Tr[X_{R}\,\rho^{Z}_{R}]=\gamma$, every term containing $\Delta E_{D}$ vanishes, and one obtains
\begin{equation}
\label{eq:fourstate-traces}
\begin{gathered}
  \Tr\big[\hat{O}^{\rm ph}_{A\Bbob DR}\,\rho\big]
  =\tfrac{1}{2}\Tr\big[\big(\identity_{A\Bbob}-\gamma\,X_A\otimes X_{\Bbob}\big)\otimes\bar{E}_{D}\,\sigma\big],
  \\[2pt]
  \Tr\big[\hat{O}^{\rm x}_{A\Bbob DR}\,\rho\big]
  =\tfrac{1}{2}\Tr\big[\big(\identity_{A\Bbob}-X_A\otimes X_{\Bbob}\big)\otimes\bar{E}_{D}\,\sigma\big],
\end{gathered}
\end{equation}
and
\begin{equation}\label{eq:fourstate-traces2}
    \Tr\big[N^{Z}_{\Bbob DR}\,\rho\big]=\Tr\big[N^{X}_{\Bbob DR}\,\rho\big]=\Tr\big[\big(\identity_{A\Bbob}\otimes\bar{E}_{D}\big)\,\sigma\big]\green{.}
\end{equation}
Hence 
\begin{equation}\label{eq:fourstate-operator-difference}
\begin{split}
    \Tr[(\hat{O}^{\rm ph}-\hat{O}^{\rm x})\rho]&=\tfrac{1-\gamma}{2}\Tr[(X_A\otimes X_{\Bbob}\otimes\bar{E}_{D})\,\sigma]
    \\
    &
    \leq
    \tfrac{1-\gamma}{2}\Tr[(\identity_{A\Bbob}\otimes\bar{E}_{D})\,\sigma],
\end{split}
\end{equation}
where we have used $\bar{E}_{D}\geq0$ and $\norm{X_A\otimes X_{\Bbob}}_{\infty}=1$. Moreover, since the detection probability is basis independent (see \cref{eq:fourstate-traces2}) we can divide by $\Tr[(\identity_{A\Bbob}\otimes\bar{E}_{D})\,\sigma]$ in both sides of~\cref{eq:fourstate-operator-difference} and multiply by the basis-choice probabilities, leading, for every detected round $u$, to
\begin{equation}
\label{eq:fourstate-final}
\begin{split}
  \Pr[\rv{\chi}_{\rm ph}^{u}|\mathcal{F}_{u-1},\rv{D}_u]
  \leq\,
  &
  \frac{\piA{Z}\pB{Z}}{\piA{X}\pB{X}}\,\Pr[\rv{\chi}_{\rm x,er}^{u}|\mathcal{F}_{u-1},\rv{D}_u]
  \\
  &
  +\piA{Z}\pB{Z}\,\frac{1-\gamma}{2},
\end{split}
\end{equation}
where $\rv{D}_u$ represents the detection event. Since the coefficients in \cref{eq:fourstate-final} are state-independent, indexing the random variables by the \emph{detected} rounds and applying Azuma-type inequalities~\cite{azuma,kato,mannalath2026quantum}, as in the main text, bounds the number of phase errors by the rescaled observed number of $X$-basis errors plus $\tfrac{1-\gamma}{2}$ times the number of detected \emph{key} rounds, with deviation terms scaling with the number of detected rounds.

\bibliography{refs}

\end{document}